\renewcommand{\Pr}{\mathbb{P}} 
\DeclareMathOperator{\EV}{\mathbb{E}} 
\DeclareMathOperator{\LR}{\Lambda}
\DeclareMathOperator{\ARL}{ARL}
\DeclareMathOperator{\One}{\mathchoice{\rm 1\mskip-4.2mu l}{\rm 1\mskip-4.2mu l}{\rm 1\mskip-4.6mu l}{\rm 1\mskip-5.2mu l}}
\newcommand{\indicator}[1]{{\One_{\left\{#1\right\}}}}
\begin{document}

\title*{An Exact Formula\texorpdfstring{\\}{}
for the Average Run Length to False Alarm\texorpdfstring{\\}{}
of the Generalized Shiryaev--Roberts Procedure\texorpdfstring{\\}{}
for Change-Point Detection\texorpdfstring{\\}{}
under Exponential Observations}
\titlerunning{On the ARL to False Alarm of the Generalized Shiryaev--Roberts Procedure}
\author{Wenyu Du and Grigory Sokolov and Aleksey S. Polunchenko}
\authorrunning{Du, Sokolov and Polunchenko}
\institute{W. Du, G. Sokolov and A.S. Polunchenko \at Department of Mathematical Sciences, State University of New York at Binghamton, Binghamton, New York 13902--6000, USA e-mail:\texttt{\{wdu1,gsokolov,aleksey\}@binghamton.edu}}
%
%
\maketitle
\abstract{We derive analytically an exact closed-form formula for the standard minimax Average Run Length (ARL) to false alarm delivered by the Generalized Shiryaev--Roberts (GSR) change-point detection procedure devised to detect a shift in the baseline mean of a sequence of independent exponentially distributed observations. Specifically, the formula is found through direct solution of the respective integral (renewal) equation, and is a general result in that the GSR procedure's headstart is not restricted to a bounded range, nor is there a ``ceiling'' value for the detection threshold. Apart from the theoretical significance (in change-point detection, exact closed-form performance formulae are typically either difficult or impossible to get, especially for the GSR procedure), the obtained formula is also useful to a practitioner: in cases of practical interest, the formula is a function linear in both the detection threshold and the headstart, and, therefore, the ARL to false alarm of the GSR procedure can be easily computed.
}
\abstract*{We derive analytically an exact closed-form formula for the standard minimax Average Run Length (ARL) to false alarm delivered by the Generalized Shiryaev--Roberts (GSR) change-point detection procedure devised to detect a shift in the baseline mean of a sequence of independent exponentially distributed observations. Specifically, the formula is found through direct solution of the respective integral (renewal) equation, and is a general result in that the GSR procedure's headstart is not restricted to a bounded range, nor is there a ``ceiling'' value for the detection threshold. Apart from the theoretical significance (in change-point detection, exact closed-form performance formulae are typically very difficult or impossible altogether to get, especially for the GSR procedure), the obtained formula is also useful to a practitioner: in cases of practical interest, the formula is a function linear in both the detection threshold and the headstart, and, therefore, the ARL to false alarm of the GSR procedure can be easily computed.
}

\section{Introduction}
\label{sec:intro}
Quickest change-point detection is concerned with the design and analysis of reliable statistical machinery for rapid detection of changes that may spontaneously affect a ``live'' process, continuously monitored via sequentially made observations. See, e.g.,~\cite{Poor+Hadjiliadis:Book2009} or~\cite[Part~II]{Tartakovsky+etal:Book2014}. A quickest change-point detection procedure is a stopping time adapted to the observed data, and is a rule whereby one is to stop and ``sound an alarm'' that the characteristics of the observed process may have (been) changed. A ``good'' (i.e., optimal or nearly optimal) detection procedure is one that minimizes (or nearly minimizes) the desired detection delay penalty, subject to a constraint on the false alarm risk. For an overview of the major optimality criteria see, e.g.,~\cite{Tartakovsky+Moustakides:SA2010,Polunchenko+Tartakovsky:MCAP2012,Polunchenko+etal:JSM2013,Veeravalli+Banerjee:BookCh2013} or~\cite[Part~II]{Tartakovsky+etal:Book2014}.

A problem particularly persistent in applied change-point detection (e.g., in quality control) is evaluation of detection procedures' performance. To that end, the ideal would be to have the needed performance metrics expressed exactly and in a closed and simple form. However, this is generally quite difficult mathematically, if at all possible. Part of the reason is that the renewal equations that many popular performance metrics satisfy are Fredholm integral equations of the second kind (possibly written as equivalent differential equations), and such equations seldom allow for an analytical solution. As a result, the standard practice has been to evaluate the performance numerically (one particularly popular approach has been to devise an asymptotic approximation of some sort). Nevertheless, some exact performance formulae have been derived explicitly, although primarily for the ``mainstream'' detection methods. For instance, a number of characteristics of the celebrated CUSUM ``inspection scheme'' (due to~\cite{Page:B1954}) have been expressed explicitly, e.g., in~\cite{Regula:PhD-Thesis1975,Vardeman+Ray:T1985,Gan:SS1992,Knoth:Thesis1995,DeLucia+Poor:IEEE-IT1997,Knoth:SA1998}\footnote{\label{fnt:literature}By no means is this an exhaustive list of available papers on the subject.}, although for only a handful of scenarios. Likewise, exact closed-form formulae for various performance metrics of the famous EWMA chart (due to~\cite{Roberts:T1959}) in an exponential scenario have been established, e.g., in~\cite{Novikov:TPA1990,Gan:JQT1998,Polunchenko+etal:SLJAS2014}\footref{fnt:literature}.

However, the corresponding progress made to date for the classical Shiryaev--Roberts (SR) procedure (due to~\cite{Shiryaev:SMD1961,Shiryaev:TPA1963,Roberts:T1966}) is far more modest (except for the continuous-time case), and especially little has been done for the Generalized SR (GSR) procedure, which was introduced recently in~\cite{Moustakides+etal:SS2011} as a ``headstarted'' version of the classical SR procedure. Since the latter is a special case of the GSR procedure (when the headstart is zero), from now on we will follow~\cite{Tartakovsky+etal:TPA2012} and use the term ``GSR procedure'' to refer to both procedures. As a matter of fact, to the best of our knowledge, exact and explicit formulae for a small subset of characteristics of the GSR procedure have been obtained only in~\cite{Pollak:AS1985,Kenett+Pollak:IEEE-TR1986,Mevorach+Pollak:AJMMS1991,Yakir:AS1997,Mei:AS2006,Tartakovsky+Polunchenko:IWAP2010,Polunchenko+Tartakovsky:AS2010,Polunchenko+Tartakovsky:MCAP2012}. The purpose of this work is to add on to this list. Specifically, we obtain an exact, closed-form formula for the standard (minimax) Average Run Length (ARL) to false alarm delivered by the GSR procedure devised to detect a jump in the common baseline mean of a sequence of independent exponentially distributed observations. The formula is found analytically, through direct solution of the respective renewal (integral) equation, and is valid for an arbitrary (nonnegative) headstart, with the detection threshold not restricted from above. Furthermore, the formula is remarkably simple (it is a function linear in the detection threshold and in the headstart) and, unlike its complicated and cumbersome CUSUM and EWMA counterparts, {\em can} be used to compute the GSR procedure's ARL to false alarm (in the exponential scenario) ``{\em by hand}''. This would clearly be of aid to a practitioner.

\section{Preliminaries}
\label{sec:preliminaries}
The centerpiece of this work is the (minimax) Average Run Length (ARL) to false alarm of the Generalized Shiryaev--Roberts (GSR) detection procedure (due to~\cite{Moustakides+etal:SS2011}) considered in the context of the basic minimax quickest change-point detection problem (see, e.g.,~\cite{Lorden:AMS1971,Pollak:AS1985}). As a performance metric, the ARL to false alarm was apparently introduced in~\cite{Page:B1954}; see also, e.g.,~\cite{Lorden:AMS1971}.

Let $f_\infty(x)$ and $f_0(x)$ denote, respectively, the observations' pdf in the pre- and post-change regime. Let $\LR_n\triangleq f_0(X_n)/f_\infty(X_n)$ be the ``instantaneous'' likelihood ratio (LR) for the $n$-th data point, $X_n$. The GSR procedure (due to~\cite{Moustakides+etal:SS2011}) is then formally defined as the stopping time
\begin{equation}\label{eq:T-SR-r-def}
\mathcal{S}_{A}^r
\triangleq
\inf\{n\ge1\colon R_n^r \ge A\},\;\text{such that}\;\inf\{\varnothing\}=\infty,
\end{equation}
where $A>0$ is a detection threshold used to control the false alarm risk, and
\begin{equation}\label{eq:Rn-SR-r-def}
R_{n+1}^r
=
(1+R_{n}^r)\LR_{n+1}\;\text{for}\; n=0,1,\ldots\;\text{with}\; R_0^r=r\ge0,
\end{equation}
is the GSR detection statistic. We remark that $R_0^r=r\ge0$ is a design parameter referred to as the headstart and, in particular, when $R_0^r=r=0$, the GSR procedure is equivalent to the classical Shiryaev--Roberts (SR) procedure (due to~\cite{Shiryaev:SMD1961,Shiryaev:TPA1963,Roberts:T1966}); a brief account of history of the SR procedure may be found, e.g., in~\cite{Pollak:IWSM2009}. Albeit ``young'' (the GSR procedure was proposed in 2011), it has already been shown (see, e.g.,~\cite{Pollak+Tartakovsky:SS2009,Shiryaev+Zryumov:Khabanov2010,Tartakovsky+Polunchenko:IWAP2010,Polunchenko+Tartakovsky:AS2010,Tartakovsky+etal:TPA2012}) to possess very strong optimality properties, not exhibited by the CUSUM scheme or the EWMA chart; in fact, in certain scenarios, the latter two charts have been found experimentally to be inferior to the GSR procedure.

Let $\Pr_\infty$ ($\EV_\infty$) be the probability measure (expectation) induced by the observations in the pre-change regime, i.e., when $X_n\propto f_\infty(x)$ for {\em all} $n\ge1$.  The ARL to false alarm of the GSR procedure is defined as $\ARL(\mathcal{S}_A^r)\triangleq\EV_\infty[\mathcal{S}_A^r]$. A key property of the GSR statistic~\eqref{eq:Rn-SR-r-def} is that the sequence $\{R_n^r-n-r\}_{n\ge0}$ is a zero-mean $\Pr_\infty$-martingale, i.e., $\EV_\infty[R_n^r-n-r]=0$ for all $n\ge0$ and all $r$. This and Doob's Optional stopping (sampling) theorem (see, e.g.,~\cite[Theorem~2.3.1,~p.~31]{Tartakovsky+etal:Book2014}) imply that $\EV_\infty[R_{\mathcal{S}_A^{r}}-\mathcal{S}_A^{r}-r]=0$, so that $\ARL(\mathcal{S}_A^{r})=\EV_\infty[R_{\mathcal{S}_A^{r}}]-r\ge A-r$. As a result, to ensure that $\ARL(\mathcal{S}_A^{r})\ge\gamma$ for a desired $\gamma>1$, it suffices to pick $A$ and $r$ from the solution set of the inequality  $A-r\ge\gamma$ and such that $A>0$ and $r\ge0$.

A more accurate result is the approximation $\ARL(\mathcal{S}_A^r)\approx A/\xi-r$ valid for sufficiently large $A>0$; see, e.g.,~\cite[Theorem~1]{Pollak:AS1987} or~\cite{Tartakovsky+etal:TPA2012}. To define $\xi$, let $S_n\triangleq\sum_{i=1}^n\log\LR_n$ for $n\ge1$, and let $\tau_a\triangleq\inf\{n\ge1\colon S_n\ge a\}$ for $a>0$ (again, with the understanding that $\inf\{\varnothing\}=\infty$). Then $\kappa_a\triangleq S_{\tau_a}-a$ is the so-called ``overshoot'' (excess over the level $a>0$ at stopping), and $\xi\triangleq\lim_{a\to\infty}\EV_0[e^{-\kappa_a}]$, and is referred to as the ``limiting average exponential overshoot''; here $\EV_0$ denotes the expectation under the probability measure induced by the observations in the post-change regime, i.e., when $X_n\propto f_0(x)$ for {\em all} $n\ge1$. In general, $\xi$ is clearly between $0$ and $1$, and is a model-dependent constant, which falls within the scope of nonlinear renewal theory; see, e.g.,~\cite{Woodroofe:Book1982},~\cite[Section~II.C]{Veeravalli+Banerjee:BookCh2013} or~\cite[Section~2.6]{Tartakovsky+etal:Book2014}.

We now state the main equation that we shall deal with (and, in fact, solve analytically) in the next section in a certain exponential scenario. Let $P_\infty^{\LR}(t)\triangleq\Pr_\infty(\LR_1\le t)$, $t\ge0$, be the cdf of the LR under probability measure $\Pr_\infty$. Let $R_0^{r=x}=x\ge0$ be fixed and define
\begin{equation}\label{eq:def-Kinf}
\mathcal{K}_\infty(x,y)
\triangleq
\frac{\partial}{\partial y}
\Pr_\infty(R_{n+1}^r\le y|R_n^r=x)
=
\frac{\partial}{\partial y}P_{\infty}^{\LR}\left(\frac{y}{1+x}\right),\;\text{for}\; x,y\ge0,
\end{equation}
i.e., the transition probability density kernel for the homogeneous Markov process $\{R_n^r\}_{n\ge0}$ under probability measure $\Pr_\infty$.

From now on, let $\ell(x,A)\triangleq\ARL(\mathcal{S}_A^{r=x})$. It is shown, e.g., in~\cite{Moustakides+etal:SS2011}, that $\ell(x,A)$ is governed by the renewal equation
\begin{equation}\label{eq:ARL-int-eqn}
\ell(x,A)
=
1+\int_0^A\mathcal{K}_{\infty}(x,y)\,\ell(y,A)\,dy,
\end{equation}
where $x\ge0$ and $A>0$. The question of existence and uniqueness of solution for this equation has been answered in the affirmative, e.g., in~\cite{Moustakides+etal:SS2011}. It is this equation, viz. the exact solution thereof in a specific exponential scenario, that is the centerpiece of this work.

Equation~\eqref{eq:ARL-int-eqn} is a Fredholm (linear) integral equation of the second kind. Since for such equations an analytical solution is rarely a possibility, they are usually solved numerically. Numerical schemes specifically for equation~\eqref{eq:ARL-int-eqn} have been developed and applied, e.g., in~\cite{Tartakovsky+etal:IWSM2009,Moustakides+etal:SS2011,Polunchenko+etal:SA2014}. However, it turns out that in a certain exponential scenario it is possible to solve~\eqref{eq:ARL-int-eqn} analytically, and, more importantly, the solution is a simple linear function of $x$ and $A$, just as one would expect from the approximation $\ARL(\mathcal{S}_A^{r})\approx A/\xi-r$ mentioned earlier. This is the main result of this paper, it generalizes~\cite[Proposition~1]{Kenett+Pollak:JAP1996}, and the details are given in the next section.

\section{The Main Result}
\label{sec:main-result}
We are now in a position to establish the main result of this work, i.e., derive analytically an exact closed-form formula for the ARL to false alarm exhibited by the GSR procedure~\eqref{eq:T-SR-r-def}--\eqref{eq:Rn-SR-r-def} ``tasked'' to detect a change in the baseline (common) mean of a series of independent exponentially distributed observations. More concretely, suppose the observations' pre- and post-change pdf's are
\begin{equation}\label{eq:exp-model-def}
f_\infty(x)
=
e^{-x}\indicator{x\ge0}\;\text{and}\;
f_0(x)
=\frac{1}{1+\theta}e^{-x/(1+\theta)}\indicator{x\ge0},
\end{equation}
respectively, where $\theta>0$, a known parameter with an obvious interpretation: it is the magnitude of the shift in the mean of the exponential distribution, so that the higher (lower) the value of $\theta$, the more (less) contrast the mean shift is, and the easier (harder) it is to detect. We shall from now on refer to this scenario as the $\mathcal{E}(1)$-to-$\mathcal{E}(1+\theta)$ model, to reflect not only the throughout ``exponentiality'' of the data, but also that their mean is $1$ pre-change and $1+\theta>1$ post-change. For a motivation to consider this model, see, e.g.,~\cite{Kenett+Pollak:IEEE-TR1986},~\cite{Tartakovsky+Ivanova:PIT1992}, or~\cite[Section~3.1.6]{Tartakovsky+etal:Book2014}.

To ``tailor'' the general equation~\eqref{eq:ARL-int-eqn} on the ARL to false alarm to the $\mathcal{E}(1)$-to-$\mathcal{E}(1+\theta)$ model, the first step is to find $\LR_n\triangleq f_0(X_n)/f_\infty(X_n)$. To that end, it is easy to see from~\eqref{eq:exp-model-def} that
\begin{equation}\label{eq:exp-case-LR-def}
\LR_n
=
\frac{1}{1+\theta}\exp\left\{\frac{\theta}{1+\theta} X_n\right\},\,n\ge1,
\end{equation}
and we note that since $X_n\ge0$ w.p.~1 for all $n\ge1$ under any probability measure, it can be deduced that $\LR_n\ge1/(1+\theta)$ w.p.~1 for all $n\ge1$, also under any probability measure. The latter inequality is a circumstance with consequences, which are illustrated in the following two results.
\begin{lemma}
For the $\mathcal{E}(1)$-to-$\mathcal{E}(1+\theta)$ model~\eqref{eq:exp-model-def}, the pre-change transition probability density kernel, $\mathcal{K}_\infty(x,y)$, defined by~\eqref{eq:def-Kinf}, is given by the formula:
\begin{equation}\label{eq:Kinf-exp-case-formula}
\mathcal{K}_\infty(x,y)
=
\theta^{-1}(1+\theta)^{-1/\theta} y^{-2-1/\theta}(1+x)^{1+1/\theta}\indicator{y\ge(1+x)/(1+\theta)},
\end{equation}
where it is understood that $x\ge0$.
\end{lemma}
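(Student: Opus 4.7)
The plan is to compute $P_\infty^{\Lambda}(t)\triangleq\Pr_\infty(\Lambda_1\le t)$ in closed form by inverting the explicit formula~\eqref{eq:exp-case-LR-def} for $\Lambda_1$ as a monotone function of $X_1$, and then differentiating $P_\infty^{\Lambda}(y/(1+x))$ with respect to $y$ via the chain rule, as mandated by definition~\eqref{eq:def-Kinf}.

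First I would note that under $\Pr_\infty$ the observation $X_1$ has pdf $f_\infty(x)=e^{-x}\indicator{x\ge 0}$, so $\Pr_\infty(X_1\le u)=(1-e^{-u})\indicator{u\ge 0}$. Since $\Lambda_1=(1+\theta)^{-1}\exp\{\theta X_1/(1+\theta)\}$ by~\eqref{eq:exp-case-LR-def} is a strictly increasing function of $X_1$, inverting gives $\{\Lambda_1\le t\}=\{X_1\le[(1+\theta)/\theta]\log((1+\theta)t)\}$, valid only for $t\ge 1/(1+\theta)$ (below that threshold the event is empty, reflecting the a.s.\ lower bound on $\Lambda_1$ remarked on just above the lemma). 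Substituting $u=[(1+\theta)/\theta]\log((1+\theta)t)$ into the exponential cdf gives
\begin{equation*}
P_\infty^{\Lambda}(t)
=
\bigl[1-\bigl((1+\theta)\,t\bigr)^{-(1+\theta)/\theta}\bigr]\,\indicator{t\ge 1/(1+\theta)}.
\end{equation*}

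Next I would set $t=y/(1+x)$, so that the support condition $t\ge 1/(1+\theta)$ rewrites as $y\ge(1+x)/(1+\theta)$, producing exactly the indicator displayed in~\eqref{eq:Kinf-exp-case-formula}. Differentiating in $y$ for $y$ strictly inside this support, with $\alpha\triangleq(1+\theta)/\theta=1+1/\theta$, yields
\begin{equation*}
\frac{\partial}{\partial y}P_\infty^{\Lambda}\!\left(\frac{y}{1+x}\right)
=
\alpha\,(1+\theta)^{-\alpha}\,(1+x)^{\alpha}\,y^{-\alpha-1}.
\end{equation*}
Collecting $\alpha(1+\theta)^{-\alpha}=\theta^{-1}(1+\theta)^{-1/\theta}$ and writing $\alpha=1+1/\theta$, $-\alpha-1=-2-1/\theta$ recovers~\eqref{eq:Kinf-exp-case-formula}.

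There is no real obstacle here; the only point warranting care is that the derivative at the left endpoint $y=(1+x)/(1+\theta)$ is one-sided (and the resulting kernel should be understood as a density on $[(1+x)/(1+\theta),\infty)$), which is exactly what the indicator in~\eqref{eq:Kinf-exp-case-formula} encodes, and which is consistent with the jump $\Lambda_1\ge1/(1+\theta)$ noted immediately above the lemma. Keeping the exponent bookkeeping clean (so the conversion from the $\alpha$-notation back to the $\theta$-form matches the stated constant $\theta^{-1}(1+\theta)^{-1/\theta}$) is the only bit of ``work'' in the argument.
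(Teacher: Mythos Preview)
Your proposal is correct and follows exactly the route the paper indicates: combine the definition~\eqref{eq:def-Kinf} with the explicit form~\eqref{eq:exp-case-LR-def} of $\Lambda_1$ (inverting the monotone map $X_1\mapsto\Lambda_1$ under $\Pr_\infty$ and differentiating), with the indicator arising from the support constraint $\Lambda_1\ge 1/(1+\theta)$. The paper's own proof is a terse sketch of precisely this computation, so your write-up simply fills in the bookkeeping that the paper leaves implicit.
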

\begin{proof}
\smartqed
The desired result can be established directly from~\eqref{eq:def-Kinf}, i.e., the definition of the pre-change transition probability density kernel, $\mathcal{K}_\infty(x,y)$, combined with~\eqref{eq:exp-case-LR-def}, i.e., the formula for the LR specific to the $\mathcal{E}(1)$-to-$\mathcal{E}(1+\theta)$ model~\eqref{eq:exp-model-def}. The presence of the indicator function in the right-hand side of~\eqref{eq:Kinf-exp-case-formula} is an implication of the aforementioned inequality $\LR_n\ge 1/(1+\theta)$ valid w.p.~1 for all $n\ge1$ and under any probability measure.
\qed
\end{proof}
Now, with~\eqref{eq:Kinf-exp-case-formula} put in place of $\mathcal{K}_\infty(x,y)$ in the general equation~\eqref{eq:ARL-int-eqn} the latter takes on the form
\begin{equation}\label{eq:exp-case-ARL-int-eqn}
\ell(x,A)
=
1+\theta^{-1}(1+\theta)^{-1/\theta}(1+x)^{1+1/\theta}\int_{(1+x)/(1+\theta)}^Ay^{-2-1/\theta}\,\ell(y,A)\,dy,
\end{equation}
where $x\ge0$ and $A>0$, and we recall that $\ell(x,A)\triangleq\EV_\infty[\mathcal{S}_A^{r=x}]$. It is this equation that we shall now attempt solve explicitly. To that end, a natural point of departure here would be the aforementioned approximation $\ARL(\mathcal{S}_A^r)\approx A/\xi-r$, where $\xi$ is the limiting average exponential overshoot formally defined in the preceding section. It is known (see, e.g.,~\cite{Tartakovsky+Ivanova:PIT1992}) that $\xi=1/(1+\theta)\in(0,1)$ for the $\mathcal{E}(1)$-to-$\mathcal{E}(1+\theta)$ model~\eqref{eq:exp-model-def}. Hence, at least for large enough $A$'s, the solution to~\eqref{eq:exp-case-ARL-int-eqn} should behave roughly as $\ell(x,A)\approx A(1+\theta)-x$. As will be shown shortly, this is, in fact, precisely the behavior of the solution, without $A$ having to be large. However, the aforementioned fact that $\LR_n\ge1/(1+\theta)$ w.p.~1 under any measure makes things a bit complicated.
\begin{lemma}\label{lem:lwr-bnd-exp-case}
For the $\mathcal{E}(1)$-to-$\mathcal{E}(1+\theta)$ model~\eqref{eq:exp-model-def}, at each epoch $n\ge0$ and under any probability measure, the GSR statistic $R_n^{r}$ has a {\em deterministic} lower bound, i.e., $R_n^r\ge B_n^r$ w.p.~1, for each $n\ge0$ and under any probability measure, where
\begin{equation}\label{eq:lwr-bnd-exp-case}
B_n^r
\triangleq
\frac{1}{\theta}\left[1-\frac{1}{(1+\theta)^n}\right]+\frac{r}{(1+\theta)^n},\,n\ge0,
\end{equation}
and $r$ is the GSR statistic's headstart, i.e., $R_0^r=r\ge0$.
\end{lemma}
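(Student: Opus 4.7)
My plan is to prove the lemma by a simple induction on $n$, using the pointwise lower bound $\Lambda_n \ge 1/(1+\theta)$ (valid w.p.~1 under any measure) that was already noted right after equation~\eqref{eq:exp-case-LR-def}, combined with the GSR recursion~\eqref{eq:Rn-SR-r-def}. The bound on $\Lambda_n$ turns the stochastic recursion $R_{n+1}^r=(1+R_n^r)\Lambda_{n+1}$ into the deterministic pointwise inequality
\begin{equation*}
R_{n+1}^r \;\ge\; \frac{1+R_n^r}{1+\theta}\quad\text{w.p.}~1,
\end{equation*}
which is exactly the fixed-point recursion that generates $B_n^r$.

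For the base case $n=0$, one has $R_0^r = r$ and, from~\eqref{eq:lwr-bnd-exp-case}, $B_0^r = r$, so $R_0^r \ge B_0^r$ holds (with equality). For the inductive step, I would assume $R_n^r \ge B_n^r$ w.p.~1 and then write
\begin{equation*}
R_{n+1}^r \;\ge\; \frac{1+R_n^r}{1+\theta} \;\ge\; \frac{1+B_n^r}{1+\theta},
\end{equation*}
so it only remains to verify the algebraic identity $(1+B_n^r)/(1+\theta) = B_{n+1}^r$. This is a one-line check: substituting the closed form for $B_n^r$, the right-hand side becomes $\tfrac{1}{1+\theta}+\tfrac{1}{\theta(1+\theta)}(1-(1+\theta)^{-n})+r(1+\theta)^{-(n+1)}$, and using $\tfrac{1}{1+\theta}+\tfrac{1}{\theta(1+\theta)} = \tfrac{1}{\theta}$ collapses this to $\tfrac{1}{\theta}[1-(1+\theta)^{-(n+1)}]+r(1+\theta)^{-(n+1)} = B_{n+1}^r$, as required.

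As an alternative (and conceptually cleaner) way to motivate the closed form in the first place, I would solve the linear recursion $B_{n+1}=(1+B_n)/(1+\theta)$ with $B_0 = r$ directly: the unique fixed point is $B^\star = 1/\theta$, and setting $C_n \triangleq B_n - 1/\theta$ gives $C_{n+1} = C_n/(1+\theta)$, whence $C_n = (r-1/\theta)(1+\theta)^{-n}$ and rearranging recovers~\eqref{eq:lwr-bnd-exp-case}. There is no real obstacle here; the only thing worth flagging is the implicit use of monotonicity of the map $u\mapsto(1+u)/(1+\theta)$ in the inductive step, which is obvious since the slope $1/(1+\theta)$ is positive. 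Note that the bound is measure-independent precisely because $\Lambda_n\ge 1/(1+\theta)$ holds surely, not just in distribution.
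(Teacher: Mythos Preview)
Your proof is correct and follows essentially the same approach as the paper: the paper's proof simply says to ``unfold'' the recursion $R_n^r=(1+R_{n-1}^r)\Lambda_n$ one step at a time while applying $\Lambda_n\ge 1/(1+\theta)$ at each step, which is exactly your induction. You have merely supplied the algebraic details (the verification that $(1+B_n^r)/(1+\theta)=B_{n+1}^r$ and the derivation of the closed form via the fixed point), which the paper omits.
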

\begin{proof}
\smartqed
It is merely a matter of ``unfolding'' the recursion $R_n^r=(1+R_{n-1}^r)\LR_n$, $n\ge1$, one term at a time, and applying, at each step, the inequality $\LR_n\ge1/(1+\theta)$ valid w.p.~1 under any probability measure.
\qed
\end{proof}
At this point note that since $1+\theta>1$, the lower bound sequence $\{B_n^r\}_{n\ge0}$ given by~\eqref{eq:lwr-bnd-exp-case} is such that\begin{inparaenum}[\itshape(a)] \item for $r\le 1/\theta$, it increases monotonically with $n$, i.e., $r\equiv B_0^r\le B_1^r\le B_2^r\ldots$, when $r\le 1/\theta$, and \item $\lim_{n\to\infty}B_n^r=1/\theta$, irrespective of $R_0^r=r\ge0$\end{inparaenum}. Hence, when $A<1/\theta$, the GSR statistic, $\{R_n^r\}_{n\ge0}$, is guaranteed to either hit or exceed the level $A>0$ within at most $m$ steps, where $m\equiv m(r,A,\theta)$ is found from the inequality $B_m^r\ge A$, i.e.,
\begin{empheq}[%
    left={%
        m\equiv m(r,A,\theta)\triangleq%
    \empheqlbrace}]{align*}
&\left\lceil\left(\log\frac{1-\theta r}{1-\theta A}\right)\left/\,\log(1+\theta)\right.\right\rceil,\;\;\text{for $r<A\,(<1/\theta)$;}\\
&1,\;\;\text{for $r\ge A$,}
\end{empheq}
with $\lceil x\rceil$ denoting the usual ``ceiling'' function. Therefore, the general solution to~\eqref{eq:exp-case-ARL-int-eqn} is dependent upon whether $A<1/\theta$ or $A\ge1/\theta$. In the latter case, the (exact) solution is given by the following theorem, which is the main result of this paper.
\begin{theorem}\label{thm:exp-case-ARL-sln1}
For the $\mathcal{E}(1)$-to-$\mathcal{E}(1+\theta)$ model~\eqref{eq:exp-model-def}, if the detection threshold, $A>0$, is set so that $A\ge1/\theta$, then the ARL to false alarm of the GSR procedure is given by the formula:
\begin{equation}\label{eq:exp-case-ARL-sln1}
\ell(x,A)
=
1+(1+\theta)\left(A-\frac{1+x}{1+\theta}\right)\indicator{(1+x)/(1+\theta)\le A},
\end{equation}
and it is understood that $x\ge0$.
\end{theorem}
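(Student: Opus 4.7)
The plan is to prove Theorem~\ref{thm:exp-case-ARL-sln1} by an ansatz-and-verify strategy: guess the candidate $\ell(x,A)=A(1+\theta)-x$ on the region where the indicator is active, confirm it on the complementary region by a direct probabilistic argument, and then invoke uniqueness of the solution to the renewal equation~\eqref{eq:ARL-int-eqn}. The guess is motivated by the asymptotic approximation $\ARL(\mathcal{S}_A^r)\approx A/\xi-r$ recalled after~\eqref{eq:ARL-int-eqn}, combined with the known value $\xi=1/(1+\theta)$ for the $\mathcal{E}(1)$-to-$\mathcal{E}(1+\theta)$ model, which together yield exactly $A(1+\theta)-x$.

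First I would dispense with the easy case $(1+x)/(1+\theta)>A$, i.e., $x>A(1+\theta)-1$. Here Lemma~\ref{lem:lwr-bnd-exp-case} (or, equivalently, the almost-sure inequality $\LR_1\ge 1/(1+\theta)$) gives $R_1^{r=x}=(1+x)\LR_1\ge(1+x)/(1+\theta)>A$ with probability one under $\Pr_\infty$, so $\mathcal{S}_A^{r=x}=1$ a.s., whence $\ell(x,A)=1$, matching~\eqref{eq:exp-case-ARL-sln1} when the indicator vanishes. Note that this case is also consistent with~\eqref{eq:exp-case-ARL-int-eqn}, since when the lower limit of integration exceeds $A$ the integrand is supported on the empty set.

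For the main case $(1+x)/(1+\theta)\le A$, I would substitute the candidate $\ell(y,A)=A(1+\theta)-y$ into the right-hand side of~\eqref{eq:exp-case-ARL-int-eqn}. The crux is that the hypothesis $A\ge 1/\theta$ guarantees $A\le A(1+\theta)-1$, so every $y\in[(1+x)/(1+\theta),A]$ automatically satisfies $(1+y)/(1+\theta)\le A$; hence the linear form of $\ell$ applies throughout the range of integration and no piecewise splitting is needed. The verification then reduces to the two elementary antiderivatives
\[
\int y^{-2-1/\theta}\,dy=-\tfrac{\theta}{\theta+1}\,y^{-1-1/\theta},\qquad\int y^{-1-1/\theta}\,dy=-\theta\,y^{-1/\theta},
\]
evaluated between $(1+x)/(1+\theta)$ and $A$. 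After multiplying by the prefactor $\theta^{-1}(1+\theta)^{-1/\theta}(1+x)^{1+1/\theta}$, the $A^{-1/\theta}$-terms cancel and the remaining expression collapses to $A(1+\theta)-(1+x)$, so that the right-hand side of~\eqref{eq:exp-case-ARL-int-eqn} equals $1+A(1+\theta)-(1+x)=A(1+\theta)-x$, as required.

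The main obstacle, and the reason the statement requires $A\ge 1/\theta$, is precisely ensuring that the single linear piece $A(1+\theta)-y$ represents $\ell(y,A)$ throughout the integration range; if $A<1/\theta$, the deterministic lower bound $B_n^r$ from Lemma~\ref{lem:lwr-bnd-exp-case} forces a finite-step decomposition and the solution becomes piecewise. Once the identity is verified under $A\ge 1/\theta$, I would close the argument by appealing to the existence-and-uniqueness result for~\eqref{eq:ARL-int-eqn} cited after~\eqref{eq:ARL-int-eqn} (from~\cite{Moustakides+etal:SS2011}) to conclude that the verified candidate is in fact $\ell(x,A)$.
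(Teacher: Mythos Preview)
Your proposal is correct and follows the same approach as the paper's own proof: substitute the candidate~\eqref{eq:exp-case-ARL-sln1} into the integral equation~\eqref{eq:exp-case-ARL-int-eqn} and verify directly, with the condition $A\ge 1/\theta$ serving exactly the role you identify. Your write-up is in fact more detailed than the paper's (which merely says the equation ``checks out''), and your explicit appeal to uniqueness and your separate treatment of the case $(1+x)/(1+\theta)>A$ make the argument tighter.
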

\begin{proof}
\smartqed
It is sufficient to insert~\eqref{eq:exp-case-ARL-sln1} into equation~\eqref{eq:exp-case-ARL-int-eqn} and directly verify that the latter does, in fact, ``check out''. The condition that $A\ge1/\theta$ ``protects'' against the situation described in Lemma~\ref{lem:lwr-bnd-exp-case} and in the discussion following it.
\qed
\end{proof}

The special case of Theorem~\ref{thm:exp-case-ARL-sln1} when $R_0^{r=x}=x=0$ (i.e., when there is no headstart) was previously established in~\cite[Proposition~1]{Kenett+Pollak:IEEE-TR1986} using the memorylessness of the exponential distribution. It is also noteworthy that formula~\eqref{eq:exp-case-ARL-sln1} as well as equation~\eqref{eq:exp-case-ARL-int-eqn} are actually valid for $x\ge-1$; the same can also be said about the general equation~\eqref{eq:ARL-int-eqn}.

We conclude this section with a brief analysis of the case when $A<1/\theta$. Recall that the integral in the right-hand side of~\eqref{eq:exp-case-ARL-int-eqn} plays no role, unless $(1+x)/(1+\theta)<A$. For this condition to hold when $A<1/\theta$, it must be the case that $(1+x)/(1+\theta)<1/\theta$, i.e., that $x<1/\theta$. Hence, if $A<1/\theta$, then $\ell(x,A)\equiv 1$ for all $x\ge1/\theta$. To obtain $\ell(x,A)$ explicitly for $x<1/\theta$, note that if $x<1/\theta$, the function $h(x)\triangleq(1+x)/(1+\theta)$, i.e., the lower limit of integration in the integral in the right-hand side of~\eqref{eq:exp-case-ARL-int-eqn}, is such that $h(x)\ge x$. As a result, the nature of the integral equation becomes such that the unknown function, $\ell(x,A)$, is dependent {\em solely} upon the values it assumes for higher $x$'s, and since $\ell(x,A)\equiv 1$ for $x\ge1/\theta$, one can iteratively work out backwards the solution for any $x\ge0$. However, this process involves formidable integrals, and only the first few steps seem to be feasible to actually carry out.

While an explicit formula for the ARL to false alarm of the GSR procedure when $A<1/\theta$ turned out to be problematic to get, from a practical standpoint it might not be worthwhile altogether, for the formula for $A\ge1/\theta$ alone, i.e., Theorem~\ref{thm:exp-case-ARL-sln1}, is sufficient. Specifically, since $\ARL(\mathcal{S}_A^r)\ge A-r$, the formula for the ARL to false alarm when $A>1/\theta$, i.e., formula~\eqref{eq:exp-case-ARL-sln1}, will never yield ARL's lower than $(1/\theta)-r$. However, the size of this ``blind spot'' is not necessarily large, unless $\theta$ is very small, which is to say that the change in the mean in the $\mathcal{E}(1)$-to-$\mathcal{E}(1+\theta)$ model~\eqref{eq:exp-model-def} is faint and not worthy of detection to begin with. As an illustration of this point, consider the original SR procedure ($r=0$) and suppose that $\theta$ is $0.01$, which, from a practical standpoint, can hardly be considered a ``change'' in the first place. Yet, since $1/\theta$ in this case is $100$, the linear formula for the ARL to false alarm will never yield a value of $100$ or less. However, this is unlikely to be of inconvenience to a practitioner, as in most applications the ARL to false alarm is set to be at least in the hundreds, and, when $\theta=0.01$, these levels of the ARL to false alarms {\em would be} obtainable through formula~\eqref{eq:exp-case-ARL-sln1}.

\section{Concluding Remarks}
\label{sec:conclusion}
This contribution is part of the authors' ongoing effort (manifested, e.g., in~\cite{Polunchenko+etal:SA2014,Polunchenko+etal:ASMBI2014}, and, with other collaborators, e.g., in~\cite{Tartakovsky+Polunchenko:IWAP2010,Polunchenko+Tartakovsky:AS2010,Moustakides+etal:SS2011,Tartakovsky+etal:TPA2012}) to ``pave the way'' for further research on the theory and application of the GSR procedure. To that end, case studies involving ``stress-testing'' the GSR procedure on real data are still an ``uncharted territory'' and would be of particular interest. Hopefully, the result obtained in this work, the data-analytic advantages pointed out in~\cite{Kenett+Pollak:JAP1996}, and the strong optimality properties established, e.g., in~\cite{Pollak+Tartakovsky:SS2009,Shiryaev+Zryumov:Khabanov2010,Tartakovsky+Polunchenko:IWAP2010,Polunchenko+Tartakovsky:AS2010,Tartakovsky+etal:TPA2012}, will help the GSR procedure rightly stand out as the top tool for change-point detection.

\begin{acknowledgement}
The authors would like to thank Prof. Sven Knoth of the Helmut Schmidt University, Hamburg, Germany, and Prof. Ansgar Steland of the RWTH Aachen University, Aachen, Germany, for the invitation to contribute this work to the 12-th German--Polish Workshop on Stochastic Models, Statistics and Their Applications. Constructive feedback provided by Dr. Ron Kenett of Israel-based KPA Ltd. (\url{www.kpa-group.com}), by Prof. William H. Woodall of Virginia Polytechnic Institute, Blacksburg, Virginia, USA, and by two anonymous referees is greatly appreciated as well.

The effort of A.S.~Polunchenko was supported, in part, by the Simons Foundation (\url{www.simonsfoundation.org}) via a Collaboration Grant in Mathematics (Award \#\,304574) and by the Research Foundation for the State University of New York at Binghamton via an Interdisciplinary Collaboration Grant (Award \#\,66761).

Last but not least, A.S.~Polunchenko is also indebted to the Office of the Dean of the Harpur College of Arts and Sciences at the State University of New York at Binghamton for the support provided through the Dean's Research Semester Award for Junior Faculty granted for the Fall semester of 2014.
\end{acknowledgement}

%
%
%

\end{document}